\DeclarePairedDelimiter\ket{\lvert}{\rangle}
\begin{document}
\authorrunning{Hoof, I van}
\titlerunning{Space-efficient quantum multiplication...}

\title{Space-efficient quantum multiplication of polynomials for binary finite fields with sub-quadratic Toffoli gate count}
\author{Iggy van Hoof
}

\institute{Technische Universiteit Eindhoven}

\maketitle

\begin{abstract}
    Multiplication is an essential step in a lot of calculations. In this paper we look at multiplication of 2 binary polynomials of degree at most $n-1$, modulo an irreducible polynomial of degree $n$ with $2n$ input and $n$ output qubits, without ancillary qubits, assuming no errors. With straightforward schoolbook methods this would result in a quadratic number of Toffoli gates and a linear number of CNOT gates. This paper introduces a new algorithm that uses the same space, but by utilizing space-efficient variants of Karatsuba multiplication methods it requires only $O(n^{\log_2(3)})$ Toffoli gates at the cost of a higher CNOT gate count: theoretically up to $O(n^2)$ but in examples the CNOT gate count looks a lot better.
\end{abstract}

\section{Introduction}
Multiplication of two polynomials in a finite field is an important step in many algorithms, such as point addition in elliptic curve cryptography. For classical computers a wealth of variations exist, often based around Karatsuba's multiplication method \cite{karatsuba1962multiplication}.

In the classical setting, temporary results for the steps of Karatsuba calculations have traditionally been stored separately. In 1993 Maeder \cite{maeder1993karatsuba} used around $2n$ additional space for multiplying degree-$n$ polynomials. This was improved by Thom\'e in 2002 to $n$ temporary space which at the time was believed to be optimal: ``it does not seem likely that anything better than this result can be obtained.'' \cite{thome2002karatsuba} However, in 2009 Roche did obtain a better result: $O(\log n)$ space Karatsuba multiplication of polynomials without additional time by doing many in-place operations \cite{roche2009space}. This was expanded by Cheng \cite{cheng16space} to also work for integers. Despite the advantages these variants offer, these methods are still relatively unknown.

This bound of $O(\log n)$ temporary storage is still higher than the bound presented in this paper, which is reduced to 0 by partly overwriting the input polynomial and restoring it before the end. With this advantage we can modify the algorithms presented by Roche \cite{roche2009space} for the quantum setting. The algorithms in this paper have an exponential speedup over other quantum algorithms that do not use extra space \cite{parent2017karatsuba}. Other variants that reach the same speedup as classical Karatsuba multiplication in the quantum setting so far have have done so at the cost of space \cite{kepley2015quantum}.

\subsection{Overview}
We introduce our notation for quantum computing by giving the elementary quantum gates in section \ref{s:background}. Our new multiplication algorithm needs several subroutines, specifically modular shifts and multiplication by a constant polynomial, introduced in section \ref{s:basic}. We introduce a Quantum Karatsuba algorithm for multiplication without reduction in section \ref{s:mult} and in binary finite fields in section \ref{s:revkar}. Both algorithms run without ancillary qubits and have a sub-quadratic Toffoli gate count. We implemented the algorithm in a simulated quantum computer and present the gate counts for specific finite fields in section \ref{s:results}.

\section{Quantum background}\label{s:background}
Quantum computing uses reversible gates, which unlike classical gates can be run in reverse and require an equal number of input and output quantum bits (qubits). In this paper we will not make use of the quantum properties of qubits, but the gates we use can be applied to superpositions of qubits in states 1 and 0. For the purpose of multiplication we need two gates to do reversible addition and multiplication:
\begin{itemize}
    \item The CNOT, or Feynman, gate serves as the equivalent of XOR or $\mathbb{F}_2$-addition. This gate takes 2 qubits as inputs and adds one input to the other qubit and outputs the other qubit as itself: $(a,b)\rightarrow (a\oplus b,b)$. It is reversible and its own inverse: applying it twice would result in $(a \oplus b\oplus b,b)=(a,b)$. In Circuit \ref{c:CNOT} an example has been drawn. In algorithms we write this as $a\leftarrow\text{CNOT}(a,b)$.
    \item The Toffoli (TOF) gate serves as the equivalent of AND or $\mathbb{F}_2$-multiplication in our case. This gate takes 3 qubits as inputs and adds the result of mulitplication of the frist two qubits to the third qubit and outputs the other qubits as themselves: $(a,b,c)\rightarrow(a,b,c\oplus (a\cdot b))$. It is also its own inverse. In circuit \ref{c:TOF} an example has been drawn. In algorithms we write this as $c\leftarrow\text{TOF}(a,b,c)$
\end{itemize}
\begin{figure}[h]\centering\makebox{\Qcircuit @C=1em @R=0.7em @! {
a& & \targ &\qw & a\oplus b \\
b& & \ctrl{-1} & \qw & b}}
\caption{The CNOT gate}
\label{c:CNOT}
\end{figure}
\begin{figure}[h]\centering\makebox{\Qcircuit @C=1em @R=0.7em @! {
a& & \ctrl{2} &\qw && a \\
b& & \ctrl{1} &\qw && b \\
c& & \targ & \qw && c\oplus (a\cdot b)}}
\caption{The TOF gate}
\label{c:TOF}
\end{figure}
In addition to these operations, we will also need to swap some qubits. Unlike the previous gates we do not build these in physical circuits. Rather, we change the index on some qubits: if we were to swap qubits 1 and 2 we would simply refer to qubit 1 as ``2'' and qubit 2 as ``1'' from that point on without counting any quantum gates. In Circuit \ref{c:swap} an example has been drawn.
\begin{figure}[h]\centering\makebox{\Qcircuit @C=1em @R=0.7em @! {
a& & \qswap &\qw & b \\
b& & \qswap\qwx & \qw & a}}
\caption{The swap}
\label{c:swap}
\end{figure}
\\These 3 actions are the only essential ones we use in this paper. Although none of these are explicit quantum actions, the quantum dimension comes from optimizing for low Toffoli gate count. Currently no large quantum computer exists but current estimates put the cost of one Toffoli gate at many times that of a CNOT gate.

\section{Basic Arithmetic}\label{s:basic}
In this section we discuss reversible in-place algorithms for the basic arithmetic of binary polynomials.

\subsection{Addition and binary shift}\label{s:shift}

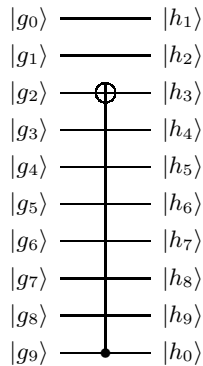
\begin{figure}\centering\makebox{\Qcircuit @C=1em @R=0.7em @! {
\lstick{\ket{g_0}} & \qw & \rstick{\ket{h_1}} \qw \\
\lstick{\ket{g_1}} & \qw & \rstick{\ket{h_2}} \qw \\
\lstick{\ket{g_2}} & \targ & \rstick{\ket{h_3}} \qw \\
\lstick{\ket{g_3}} & \qw & \rstick{\ket{h_4}} \qw \\
\lstick{\ket{g_4}} & \qw & \rstick{\ket{h_5}} \qw \\
\lstick{\ket{g_5}} & \qw & \rstick{\ket{h_6}} \qw \\
\lstick{\ket{g_6}} & \qw & \rstick{\ket{h_7}} \qw \\
\lstick{\ket{g_7}} & \qw & \rstick{\ket{h_8}} \qw \\
\lstick{\ket{g_8}} & \qw & \rstick{\ket{h_9}} \qw \\
\lstick{\ket{g_9}} & \ctrl{-7} & \rstick{\ket{h_0}} \qw}}
\caption{Binary shift circuit for $\mathbb{F}_{2^{10}}$ with $g_0+\cdots+g_9x^9$ as the input and $h_0+\cdots+h_9x^9=g_9+g_0x+g_1x^2+(g_2+g_9)x^3+g_3x^4+\cdots+g_9x^9$ as the output.}\label{c:modshift}\end{figure}

\noindent The first operation we consider, addition, can easily be implemented for binary polynomials. Individual additions can be done with a CNOT gate, the addition of two polynomials of degree at most $n$ takes $n+1$ CNOT gates with depth 1. This operation uses ancillary qubits and the result of the addition replaces either of the inputs. Since addition is component-wise, addition for polynomials over $\mathbb{F}_2$ is the same as  addition for elements of the field $\mathbb{F}_{2^n}$.

Binary shifts are straightforward: they correspond to multiplying or dividing by $x$. This requires no quantum computation by doing a series of swaps.

Finally, if we have a fixed $n$, a polynomial $g(x)$ of degree at most $n-1$ and want to do a multiplication by $x$ followed by a modular reduction by a fixed weight-$\omega$ and degree-$n$ polynomial $m(x)$ that has coefficient 1 for $x^0$, we can do this in 2 steps. We represent $m(x)$ as $M$ where $M$ is an ordered list of length $\omega$ that contains the degrees of the nonzero terms in descending order, for example if $m(x)=1+x^3+x^{10}$ we get $M=[10,3,0]$. Let $g(x)=\sum_{i=0}^{n-1}g_ix^i$:
\begin{itemize}
    \item Step 1: For every qubit $g_i$ change its index so that it represents the coefficient of $x^{i+1\text{ mod }n}$. Let $h_i$ be the coefficients of the relabeled polynomial, i.e. $h_{i+1\text{ mod }n}=g_i$.
    \item Step 2: Apply CNOT controlled by the $x^0$ term $h_0$ ($g_{n-1}$ before Step 1) to $h_j$, with $j=M_1,\ldots,M_{\omega-2}$. In the example of $1+x^3+x^{10}$ we would apply 1 CNOT to $h_3$ controlled by $h_0$.
\end{itemize}

\noindent See Circuit \ref{c:modshift} for an example. After a multiplication by $x$ the coefficient of $x^0$ is always 0. Since $m(x)$ always has coefficient 1 for $x^0$, after a reduction by $m(x)$ that qubit will be 1 and if no reduction takes place that qubit is 0, which means our modular shift algorithm is always reversible. This results in a total of $\omega-2$ CNOT gates for a modular reduction, with depth $\omega-2$ and we do not use ancillary qubits. Since we use reversible gates, running this circuit in reverse corresponds to dividing by $x$ modulo $m(x)$.

\subsection{Multiplication by a constant polynomial}
\begin{figure}[h]\centering\makebox{\Qcircuit @C=1em @R=0.7em @! {
\lstick{\ket{g_0}} & \targ & \qw & \qw & \ctrl{3} & \qw & \qw & \rstick{\ket{g_0+g_2}} \qw\\
\lstick{\ket{g_1}} & \qw & \targ & \targ & \qw & \ctrl{1} & \qw & \rstick{\ket{g_1+g_2+g_3}} \qw\\
\lstick{\ket{g_2}} & \ctrl{-2} & \ctrl{-1} & \qw & \qw & \targ & \qswap & \rstick{\ket{g_0+g_2+g_3}} \qw\\
\lstick{\ket{g_3}} & \qw & \qw & \ctrl{-2} & \targ & \qw & \qswap \qwx & \rstick{\ket{g_1+g_3}} \qw
}}\caption{Multiplication of $g$ by $1+x^2$ modulo $1+x+x^4$. Depth 4 and 5 CNOT gates.}\label{LUPmult}\end{figure}
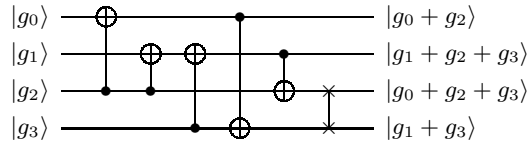

\noindent Multiplication by a constant non-zero polynomial in a fixed binary field is $\mathbb{F}_2$-linear: as the field polynomial is irreducible, every input corresponds to exactly one output. We can see that any such multiplication can be represented as a matrix, which we can turn into a circuit using an \textit{LUP}-decomposition, an algorithm also used by Amento, R\"otteler and Steinwandt \cite{amento2012efficient}. For example, multiplication by $1+x^2$ modulo $1+x+x^4$ can be represented by a matrix $\Gamma$. Using the decomposition $\Gamma=P^{-1}LU$ we get an upper and lower triangular matrix which we can translate into a circuit. Any 1 not on the diagonal in $U$ and $L$ is a CNOT controlled by the column number on the row number. In cases of conflict\footnote{Conflicts exist if according to the triangular matrix a CNOT would both have to be applied on and controlled by a qubit. By doing the controlled operation first and applying the operation on it afterwards, we ensure that the matrix multiplication is correctly translated.}, for $U$ CNOT gates should be performed top row first, second row second and so on and for $L$ CNOT gates from the bottom row up. $P$ represents a series of swaps, and can be represented either as a permutation matrix or an ordered list with all elements from $0$ to $n-1$.
$$\Gamma=\left(\begin{matrix}
1 & 0 & 1 & 0\\
0 & 1 & 1 & 1\\
1 & 0 & 1 & 1\\
0 & 1 & 0 & 1
\end{matrix}\right)=P^{-1}LU=\left(\begin{matrix}
1 & 0 & 0 & 0\\
0 & 1 & 0 & 0\\
0 & 0 & 0 & 1\\
0 & 0 & 1 & 0
\end{matrix}\right)\left(\begin{matrix}
1 & 0 & 0 & 0\\
0 & 1 & 0 & 0\\
0 & 1 & 1 & 0\\
1 & 0 & 0 & 1
\end{matrix}\right)\left(\begin{matrix}
1 & 0 & 1 & 0\\
0 & 1 & 1 & 1\\
0 & 0 & 1 & 0\\
0 & 0 & 0 & 1
\end{matrix}\right)$$
Circuit \ref{LUPmult} shows how we translate $\Gamma$. According to \cite{amento2012efficient} this costs up to $n^2 + n$ CNOT gates with depth up to $2n$. We can improve this count by noting $L$ and $U$ are each size $n$ by $n$ and can have up to $(n^2-n)/2$ non-diagonal non-zero entries, giving us up to $n^2-n$ CNOT gates. Note that the \textit{LUP}-decomposition is precomputed and for any fixed polynomial and field we can give an exact CNOT gate count and depth.

Since this algorithm is introduced in \cite{amento2012efficient} without correctness proof and we will use it later for a bigger algorithm, we will write an explicit implementation and go over the correctness of this algorithm. Note that since we are working with reversible algorithms, multiplying by constant $f(x)$ is the same as doing the reverse of multiplying by constant $f(x)^{-1}$.

\begin{algorithm}[h]
\DontPrintSemicolon
\SetKwInOut{Input}{Fixed input}\SetKwInOut{Output}{Quantum Input}
\Input{A binary \textit{LUP}-decomposition $L,U,P^{-1}$ for a binary $n$ by $n$ matrix that corresponds to multiplication by the constant polynomial $f(x)$ in the field $\mathbb{F}_2[x]/m(x)$.}
\SetKwInOut{Input}{Quantum input}
\Input{A binary polynomial $g(x)$ of degree up to $n-1$ stored in an array $G$.}
\KwResult{$G$ as $f\cdot g$ in the field $\mathbb{F}_2/m(x)$.}
\For{$i=0..n-1$\tcp*{$U\cdot G$}}{
\For{$j=i+1..n-1$}{
\If{$U[i,j]=1$}{
$G[i]\leftarrow\text{CNOT}(G[i],G[j])$
}}}
\For{$i=n-1..0$\tcp*{$L\cdot UG$}}{
\For{$j=i-1..0$}{
\If{$L[i,j]=1$}{
$G[i]\leftarrow\text{CNOT}(G[i],G[j])$
}}}
\For{$i=0..n$\tcp*{$P^{-1}\cdot LUG$}}{
\For{$j=i+1..n-1$}{
\If{$P^{-1}[i,j]=1$}{
SWAP$(G[i],G[j])$\\
SWAP column $i$ and $j$ of $P^{-1}$
}}}
\caption{MULT$_{f(x)}$, from \cite{amento2012efficient}. Reversible algorithm for in-place multiplication by a nonzero constant polynomial $f(x)$ in $\mathbb{F}_2[x]/m(x)$ with $m(x)$ an irreducible polynomial. \label{a:constmult}}
\end{algorithm}

\begin{theorem}
Algorithm \ref{a:constmult} correctly describes multiplication by a non-zero constant polynomial in a fixed binary field.
\end{theorem}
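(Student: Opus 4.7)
The plan is to reduce the statement to three in-place matrix actions and then verify each loop realises the claimed action on the register.

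First I would set up the linear-algebraic picture. Multiplication by a fixed nonzero $f(x)\in\mathbb{F}_2[x]/m(x)$ is an $\mathbb{F}_2$-linear endomorphism of the coefficient space $\mathbb{F}_2^n$, and since $m(x)$ is irreducible this endomorphism is invertible. Let $\Gamma$ be its matrix in the standard monomial basis, so that the desired output is $\Gamma G_{\mathrm{init}}$, where $G_{\mathrm{init}}$ is the initial contents of the register $G$. Using any LUP-decomposition $\Gamma=P^{-1}LU$ with $L$ unit lower-triangular, $U$ unit upper-triangular and $P$ a permutation matrix, it suffices to show that the first loop realises $G\leftarrow UG$, the second realises $G\leftarrow LG$, and the third realises $G\leftarrow P^{-1}G$.

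Next I would verify the two CNOT phases. For the $U$-phase, the required update on row $i$ is $G[i]\leftarrow G[i]+\sum_{j>i}U[i,j]\,G[j]$, which is exactly what the inner loop applies. The key point is in-place correctness: processing $i=0,1,\ldots,n-1$ in increasing order guarantees that when row $i$ is updated, all entries $G[j]$ with $j>i$ still hold their values from the start of the phase, because they will only be modified at some later iteration $i'=j>i$. By induction on $i$, after the $U$-phase $G$ holds $UG_{\mathrm{init}}$. The $L$-phase is symmetric: $L$ is lower-triangular, so processing $i=n-1,n-2,\ldots,0$ in decreasing order ensures that when row $i$ is updated from entries $G[j]$ with $j<i$, those entries have not yet been touched; an analogous induction yields $G\leftarrow LG$.

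For the permutation phase I would argue by a standard selection-sort-style invariant. At the start of iteration $i$, the matrix stored as $P^{-1}$ represents, via its surviving rows and columns indexed by $\{i,\ldots,n-1\}$, exactly the permutation that still has to be applied to $G[i],\ldots,G[n-1]$ in order to produce $P^{-1}LUG_{\mathrm{init}}$. The unique $j\ge i$ with $P^{-1}[i,j]=1$ indicates which register entry should end up in position $i$; the swap $\mathrm{SWAP}(G[i],G[j])$ places it there, and the accompanying swap of columns $i$ and $j$ of $P^{-1}$ updates the bookkeeping so that the invariant is restored for iteration $i+1$. Since swaps are implemented purely by relabeling (as explained in Section~\ref{s:shift}), this phase is cost-free and faithfully effects left multiplication by $P^{-1}$.

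Composing the three phases yields $G\leftarrow P^{-1}LUG_{\mathrm{init}}=\Gamma G_{\mathrm{init}}$, which is the coefficient vector of $f(x)g(x)\bmod m(x)$, giving the claim. I expect the main obstacle to be the careful in-place argument for the two CNOT phases — one has to be explicit that the "top-down for $U$, bottom-up for $L$" ordering is precisely what prevents a register entry from being read after it has been overwritten — and the small bookkeeping argument that the column-swap of $P^{-1}$ maintains the permutation-to-be-applied invariant throughout the third loop.
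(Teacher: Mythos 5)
Your proposal is correct and follows essentially the same route as the paper: represent multiplication by $f(x)$ as an invertible matrix $\Gamma=P^{-1}LU$ (with all-one diagonals over $\mathbb{F}_2$) and check that the three loops realise $U$, $L$, and $P^{-1}$ in turn. You are in fact more explicit than the paper about why the top-down/bottom-up orderings keep the CNOT phases in-place correct and about the column-swap invariant for the permutation phase, which the paper only asserts.
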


\begin{proof}
Since multiplication by a non-zero constant in a finite field is a linear map, an invertible matrix $\Gamma$ to represent this linear map must exist. Since $\Gamma$ is invertible, its decomposition $L,U,P^{-1}$ must also consist of linear maps. Since we are working in a binary field and $U$ is an invertible upper-triangular matrix, the diagonal of $U$ is all-one. If we look at lines 1 through 4 of the algorithm, we can see it corresponds to applying linear map $U$ to $g$, as it results in $g_i=\sum_{j=0}^{n-1}u_{i,j}g_j$ for $i=0,..,n-1$. Analogously the same is true for $L$ in lines 5 through 8. We can also see that if $P^{-1}$ is a row-permutation of the identity matrix, lines 9 through 13 will apply it correctly. Since $P^{-1}LU=\Gamma$ we have correctly applied the linear map $\Gamma$.
\end{proof}

\noindent Note that the algorithm is not optimized for depth, for example in circuit \ref{LUPmult} the first and second CNOT could be swapped so the depth would be 3 rather than 4.

\subsubsection{Choice of field polynomials}\hspace*{\fill}\\
When doing operations in a finite binary field we can choose what representation we use, as long as the polynomial $m(x)$ is irreducible. Our goal is to make the matrices $L$ and $U$ as sparse as possible. For this purpose we also want our $\Gamma$ to be as sparse as possible, which can be achieved in two steps: pick irreducible polynomials with as few non-zero coefficients as possible, i.e. trinomials when available and pentanomials otherwise, and pick irreducible polynomials where the second highest non-constant term has the lowest possible degree. For example, the pentanomial $1+x^3+x^4+x^{19}+x^{20}$ would require 108 CNOT gates, the pentanomial $1+x^3+x^5+x^9+x^{20}$ would require 55 CNOT gates, while the trinomial $1+x^3+x^{20}$ would require only 27. All 3 polynomials are irreducible. %\footnote{As checked with the Wolfram Mathematica command "IrreduciblePolynomialQ[\{$1 + x^3 + x^4 + x^{19} + x^{20},  1 + x^3 + x^5 + x^9 + x^{20}, 1 + x^3 + x^{20}$\}, Modulus -\textgreater 2]".}
In Table \ref{t:constmult} we can see some examples of gate counts for various choices of $n$. The depth count is an upper bound without accounting for swapping gates.

\begin{table}[h]
\begin{center}
\begin{tabular}{|c|c|c|c|c|}
    \hline
    Degree & Irreducible polynomial & Source & CNOT gates & Depth upper bound \\
    \hline
    4 & $[4,1,0]$ &\cite{2005ehcc} & 5 & 4\\
    8 & $[8,4,3,1,0]$ &\cite{2005ehcc} & 20 & 14\\
    16 & $[16,5,3,1,0]$ &\cite{2005ehcc} & 47 & 30\\
    32 & $[32,7,3,2,0]$ &\cite{2005ehcc} & 133 & 93\\
    64 & $[64,4,3,1,0]$ &\cite{2005ehcc} & 264 & 182\\
    127 & $[127,1,0]$ &\cite{2005ehcc} & 396 & 293 \\
    128 & $[128,7,2,1,0]$ & \cite{2005ehcc} & 626 & 443\\
    163 & $[163,7,6,3,0]$ &\cite{fips2013186}& 740 & 975 \\
    163 & $[163,89,74,15,0]$ &\cite{banegas2018new} & 1885 & 1646 \\ 
    233 & $[233,74,0]$ & \cite{fips2013186} & 3319 & 2976 \\
    256 & $[256,10,5,2,0]$ & \cite{2005ehcc} & 1401 & 1030 \\
    283 & $[283,12,7,5,0]$ &\cite{fips2013186} & 2117 & 1700 \\
    283 & $[283,160,123,37,0]$ &\cite{banegas2018new} & 6785 & 6368\\
    571 & $[571,10,5,2,0]$ &\cite{fips2013186} & 4027 & 3177 \\
    571 & $[571,353,218,135,0]$ &\cite{banegas2018new} & 33182 & 32331 \\
    1024 & $[1024, 19,6,1, 0]$ &\cite{seroussi1998table} & 8147 & 6624\\
    \hline
\end{tabular}
\end{center}
\caption{Comparison of the CNOT gates required for various instances of Algorithm \ref{a:constmult}. Source is the source of the polynomial.}\label{t:constmult}
\end{table}

\newpage
\section{Quantum Multiplication for binary polynomials}\label{s:mult}
This section details schoolbook multiplication and we present our new Karatsuba algorithm. 
\subsection{Quantum Schoolbook Multiplication}
The simplest way to multiply is schoolbook multiplication. For two polynomials of degree at most $n-1$ that takes $n^2$ Toffoli gates, the number of pairs of qubits from the first and second polynomial. While the computation does not use ancillary qubits, the result needs to be stored separately from input in $2n-1$ qubits; unlike the previous circuits we cannot replace either of the inputs with the result since the Toffoli gate requires a separate output. If we want to apply modular reduction steps by a weight-$k$ and degree-$n$ odd polynomial, this adds $(n-1)\cdot(k-2)$ CNOT gates and uses no ancillary qubits (by using the modular shift algorithm after every $n$ multiplications). The result is stored in $n$ qubits.

\subsection{Classic Karatsuba multiplication in binary polynomial rings}\label{s:ck}
Rather than using schoolbook multiplication, methods like Karatsuba multiplication \cite{karatsuba1962multiplication} can speed up multiplication of large numbers. We can look at in-place multiplication in the classical case for ideas \cite{roche2009space}. As input we take two polynomials of size up to $n$, $f(x)$ and $g(x)$ as well as a polynomial of size $2n$: $h(x)$. As output we desire $h+f\cdot g$. For some $k$ such that $\frac{n}{2}\leq k<n$ (we will always use $k=\lceil \frac{n}{2}\rceil$) we can split each polynomial as follows: $f=f_0+f_1x^k$, $g=g_0+g_1x^k$ and $h=h_0+h_1x^k+h_2x^{2k}+h_3x^{3k}$.

We compute intermediate products $\alpha=f_0\cdot g_0$, $\beta=f_1\cdot g_1$ and $\gamma=(f_0+f_1)\cdot(g_0+g_1)$. Finally, we add these in the right way for Karatsuba multiplication: $$h+f\cdot g=h+\alpha+(\gamma+\alpha+\beta)x^k+\beta x^{2k}.$$ For cleanliness, we can split up our $\alpha, \beta, \gamma$ in the same way as $f$ and $g$ to get a result with no overlap, which is useful for checking correctness:
$$h+f\cdot g=(h_0+\alpha_0)+(h_1+\alpha_0+\alpha_1+\beta_0+\gamma_0)x^k+(h_2+\alpha_1+\beta_0+\beta_1+\gamma_1)x^{2k}+(h_3+\beta_1)x^{3k}.$$
Alternatively, we can rewrite this another way that will prove useful:
$$h+f\cdot g=h+(1+x^k)\alpha+x^k\gamma+x^k(1+x^k)\beta.$$

\subsection{Reversible Karatsuba multiplication in binary polynomial rings}
Based on these equations we can split our multiplication algorithm into 2 parts: given $f(x),g(x),h(x)$ calculate $h+f\cdot g$ and given $k,f(x),g(x),h(x)$ with $k>\max(\text{deg}(f),\text{deg}(g))$ calculate $h+(1+x^k)f\cdot g$. We will look at our algorithms for the 2 parts, which can then be used recursively to provide a significant improvement to the schoolbook algorithm in terms of Toffoli gate count.

\begin{algorithm}[h]
\DontPrintSemicolon
\SetKwInOut{Input}{Fixed input}\SetKwInOut{Output}{Quantum Input}
\Input{A constant integer $k>0$ to indicate part size as well as an integer $n\leq k$ to indicate polynomial size. $\ell=\max(0,2n-1-k)$ is the size of $h_2$ and $(fg)_1$. In the case of Karatsuba we will have either $n=k$ or $n=k-1$.}
\SetKwInOut{Input}{Quantum input}
\Input{Two binary polynomials $f(x),g(x)$ of degree up to $n-1$ stored in arrays $A$ and $B$ respectively of size $n$. A binary polynomial $h(x)$ of degree up to $k+2n-2$ stored in array $C$ of size $2k+\ell$.}
\KwResult{$A$ and $B$ as input, $C$ as $h+(1+x^k)fg$}
\If{$n>1$}{
$C[k..k+\ell-1]\leftarrow \text{CNOT}(C[k..k+\ell-1],C[2k..2k+\ell-1])$\\
$C[0..k-1]\leftarrow \text{CNOT}(C[0..k-1],C[k..2k-1])$\\
$C[k..2k+\ell-1]\leftarrow \text{KMULT}(A[0..n-1],B[0..n-1],C[k..2k+\ell-1])$\\
$C[0..k-1]\leftarrow \text{CNOT}(C[0..k-1],C[k..2k-1])$\\
$C[k..k+\ell-1]\leftarrow \text{CNOT}(C[k..k+\ell-1],C[2k..2k+\ell-1])$
}\Else{
$C[0]\leftarrow\text{CNOT}(C[0],C[k])$\\
$C[k]\leftarrow\text{TOF}(A[0],B[0],C[k])$\\
$C[0]\leftarrow\text{CNOT}(C[0],C[k])$}
\caption{MULT1x$_k$. Reversible algorithm for multiplication by the polynomial $1+x^k$. \label{a:1xk}}
\end{algorithm}

\begin{table}[h]
    \centering
    \begin{tabular}{|c|c|c|c|}
        \hline
        Line &\multicolumn{3}{c|}{$C$ in MULT1x$_k$ }\\ \cline{2-4}
        & $C[0..k-1]$ & $C[k..2k-1]$ & $C[2k..2k+\ell-1]$ \\
        \hline
        1 & $h_0$&$h_1$&$h_2$ \\
        2 & $h_0$&$h_1+h_2$&$h_2$ \\
        3 & $h_0+h_1+h_2$&$h_1+h_2$&$h_2$ \\
        4 & $h_0+h_1+h_2$&$h_1+h_2+(fg)_0$&$h_2+(fg)_1$ \\
        5 & $h_0+(fg)_0$&$h_1+h_2+(fg)_0$&$h_2+(fg)_1$ \\
        6 & $h_0+(fg)_0$&$h_1+(fg)_0+(fg)_1$&$h_2+(fg)_1$ 
        \\ \hline
    \end{tabular}
    \caption{Step by step calculation of Algorithm \ref{a:1xk}.}
    \label{t:mult1xk}
\end{table}

\begin{lemma}\label{l:1xk}
Given polynomials $f,g$ of degree up to $n-1$ with $n>1$, polynomial $h$ of degree up to $k+2n-2$ with some $k\geq n$ and assuming Algorithm \ref{a:kmult} correctly calculates $h+fg$ with degrees of $f,g$ and $h$ bounded as above, Algorithm \ref{a:1xk} correctly calculates $h+(1+x^k)fg$ in $\mathbb{F}_2[x]$ without altering the values of $f$ and $g$.
\end{lemma}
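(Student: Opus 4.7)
The plan is to reduce the lemma to a direct step-by-step verification of the algorithm using the recursion hypothesis, since the essential computation is really just the algebraic identity
\[
(1+x^k)fg = (fg)_0 + \bigl((fg)_0+(fg)_1\bigr)x^k + (fg)_1\,x^{2k},
\]
where $fg = (fg)_0 + (fg)_1 x^k$ with $\deg (fg)_0 < k$ and $\deg (fg)_1 \le 2n-2-k$. Combined with the splitting $h = h_0 + h_1 x^k + h_2 x^{2k}$ corresponding to the register slices $C[0..k-1]$, $C[k..2k-1]$, $C[2k..2k+\ell-1]$, the target output is exactly the last row of Table \ref{t:mult1xk}. So the proof reduces to showing that the algorithm realizes this row.

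First I would handle the recursive branch ($n>1$). I would walk through lines 2--6 one at a time, reproducing each row of Table \ref{t:mult1xk}. Lines 2 and 3 are plain CNOT ranges: the first XORs the $h_2$-slice into the matching low $\ell$ qubits of the middle slice, the second XORs the whole middle slice into the low slice, yielding the contents in row 3. For line 4 I would invoke the stated assumption that KMULT correctly maps $(A,B,C')\mapsto(A,B,C'+fg)$ for any buffer $C'$ of sufficient length, applied here to the window $C[k..2k+\ell-1]$ of length $k+\ell\ge 2n-1$, which accommodates $\deg(fg) \le 2n-2$; this gives row 4. Lines 5 and 6 are simply the inverses of lines 3 and 2 (CNOT is involutive), so the cancellations produce rows 5 and 6 mechanically, and row 6 is precisely $h+(1+x^k)fg$ as required.

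Next I would dispatch the base case $n=1$, where the else-branch applies. Here $fg = A[0]\cdot B[0]$ is a single bit and only positions $0$ and $k$ of the output can differ from $h$. A three-line trace shows $C[0]\mapsto h_0[0]+C[k],\ C[k]\mapsto C[k]+AB,\ C[0]\mapsto h_0[0]+AB$, matching $h+(1+x^k)fg$ at the only two relevant positions and leaving all other qubits untouched.

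Finally, to conclude that $f$ and $g$ are unaffected: every gate touching $A$ or $B$ in Algorithm \ref{a:1xk} is either a Toffoli with $A[0],B[0]$ as controls (in the base case) or a recursive KMULT call, and by the hypothesis of the lemma the latter also preserves its $A,B$ inputs; CNOT operations on slices of $C$ obviously do not touch $A$ or $B$. The main bookkeeping obstacle is checking that the index ranges in lines 2--6 are consistent with the decomposition of $h$ and $fg$, in particular that $\ell = \max(0,2n-1-k)$ correctly sizes the $h_2$ slice in both cases $2n-1\le k$ and $2n-1>k$; once this is confirmed, the rest is a direct substitution into Table \ref{t:mult1xk}.
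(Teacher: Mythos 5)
Your proposal is correct and follows essentially the same route as the paper: a line-by-line trace of Algorithm \ref{a:1xk} reproducing the rows of Table \ref{t:mult1xk}, using the KMULT hypothesis at line 4 and observing that the surrounding CNOT ranges cancel the $h_1,h_2$ contributions to yield $h+(1+x^k)fg$, with $A$ and $B$ untouched. The only differences are cosmetic: you additionally trace the $n=1$ else-branch, which the lemma (stated for $n>1$) does not require, and you are slightly more explicit than the paper in noting that preservation of $A,B$ under the recursive KMULT call is part of the inductive hypothesis.
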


\begin{proof}
Let $\ell=\max(0,2n-1-k)$. Table \ref{t:mult1xk} gives the result of each step on array C, split into 3 parts of size $k$, $k$ and $\ell-1$ respectively: $h=h_0+h_1x^k+h_2x^{2k}$. The final result corresponds to $h_0+(fg)_0+(h_1+(fg)_0+(fg)_1)x^k+(h_2+(fg)_1)x^{2k}=h_0+h_1x^k+h_2x^{2k}+fg+fgx^k=h+(1+x^k)fg$, where $(fg)_0$ is the first $k$ terms of $f\cdot g$ and $(fg)_1$ is the last up to $\ell$ terms.\\\indent
$f$ and $g$ do not have their values altered because arrays $A$ and $B$ remain unchanged. 
\end{proof}

\noindent Algorithm \ref{a:1xk} computes $h+(1+x^k)fg$ with at most $2k+2\ell\geq 2k+2(2n-1-k)= 4n-2$ CNOT gates, at a depth of 4 per layer and 1 call to Algorithm \ref{a:kmult} for an $n$-by-$n$ multiplication. For $n=1$ both the depth and number of gates is 2 CNOT and 1 TOF gates.

\begin{algorithm}[h]
\DontPrintSemicolon
\SetKwInOut{Input}{Fixed input}\SetKwInOut{Output}{Quantum Input}
\Input{A constant integer $n$ to indicate polynomial size and an integer $k< n\leq 2k$ with $k=\lceil\frac{n}{2}\rceil$ for $n>1$ and $k=0$ for $n=1$, to indicate upper and lower half.}
\SetKwInOut{Input}{Quantum input}
\Input{Two binary polynomial $f,g$ of degree up to $n-1$ stored in arrays $A$ and $B$ respectively of size $n$. A binary polynomial $h$ of degree up to $2n-2$ stored in array $C$ of size $2n-1$.}
\KwResult{$A$ and $B$ as input, $C$ as $h+fg$}
\If{$n>1$}{
$C[0..3k-2]\leftarrow \text{MULT1x}_k(A[0..k-1],B[0..k-1],C[0..3k-2])$\\
$C[k..2n-2]\leftarrow \text{MULT1x}_k(A[k..n-1],B[k..n-1],C[k..2n-2])$\\
$A[0..n-k-1]\leftarrow \text{CNOT}(A[0..n-k-1],A[k..n-1])$\\
$B[0..n-k-1]\leftarrow \text{CNOT}(B[0..n-k-1],B[k..n-1])$\\
$C[k..3k-2]\leftarrow \text{KMULT}(A[0..k-1],B[0..k-1],C[k..3k-2])$\\
$B[0..n-k-1]\leftarrow \text{CNOT}(B[0..n-k-1],B[k..n-1])$\\
$A[0..n-k-1]\leftarrow \text{CNOT}(A[0..n-k-1],A[k..n-1])$
}\Else{$C[0]\leftarrow\text{TOF}(A[0],B[0],C[0])$}
\caption{KMULT. Reversible algorithm for multiplication of 2 polynomials. \label{a:kmult}}
\end{algorithm}

\begin{lemma}\label{l:kmult}
Let $k=\lceil\frac{n}{2}\rceil$. Given polynomials $f,g$ of degree up to $n-1$ with $n>1$ and $h$ of degree up to $2n-2$. Assuming Algorithm \ref{a:1xk} correctly calculates $h'+(1+x^k)f'g'$ for $f',g'$ up to degree $k-1$ and $h'$ up to degree $3k-2$, and Algorithm \ref{a:kmult} correctly calculates $h''+f''g''$ with $f'',g''$ of degree $k-1$ and $h''$ of degree $2k-2$ without altering the values of $f''$ and $g''$. Then Algorithm \ref{a:kmult} correctly calculates $h+fg$ in $\mathbb{F}_2[x]$. The values of $f$ and $g$ are the same after the algorithm as they were before.
\end{lemma}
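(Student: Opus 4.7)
The plan is to realise the Karatsuba identity already derived in Section~\ref{s:ck},
$$fg = (1+x^k)\alpha + x^k\gamma + x^k(1+x^k)\beta,$$
with $\alpha = f_0 g_0$, $\beta = f_1 g_1$, $\gamma = (f_0+f_1)(g_0+g_1)$, by showing that the $n>1$ branch of Algorithm~\ref{a:kmult} adds exactly these three contributions to $C$ in order.

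First I would dispatch the base case $n=1$: the else-branch is a single Toffoli that writes $C[0] \oplus f_0 g_0$, which is $h+fg$ since $f$ and $g$ are then single bits and $k=0$. For $n>1$ the remainder is direct symbol-pushing that uses the two assumptions of the lemma without any further induction.

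The core is bookkeeping on $C$. The first MULT1x$_k$ call, applied with its parameter ``$n$'' equal to $k$ (and thus ``$\ell$'' equal to $k-1$, so that $C[0..3k-2]$ has the required size $2k+\ell$), adds $(1+x^k)\alpha$ to the low $3k-1$ cells of $C$ by Lemma~\ref{l:1xk}. The second MULT1x$_k$ call, applied to $f_1, g_1$ (of degree at most $n-k-1 \in \{k-1,k-2\}$) and to the window $C[k..2n-2]$, adds $x^k(1+x^k)\beta$ to $C$. The two CNOT lines that follow overwrite $A[0..n-k-1]$ and $B[0..n-k-1]$ with $f_0+f_1$ and $g_0+g_1$ respectively (the shorter summand being implicitly zero-padded when $n$ is odd). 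The recursive KMULT call on these size-$k$ arrays and on $C[k..3k-2]$ adds $\gamma$ at position $k$, contributing $x^k\gamma$ to $C$, by the second hypothesis of the lemma. The final two CNOT lines are the exact inverses of the two earlier CNOT lines (each CNOT is self-inverse), so $A$ and $B$ end as $f$ and $g$. Summing the three contributions gives $h + (1+x^k)\alpha + x^k(1+x^k)\beta + x^k\gamma = h + fg$.

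The step I expect to be the main obstacle is the index bookkeeping that makes the two MULT1x$_k$ invocations legal: one has to verify, separately in the even case $n=2k$ and the odd case $n=2k-1$, that $C[0..3k-2]$ and $C[k..2n-2]$ really have the lengths $2k+\ell$ required by Algorithm~\ref{a:1xk} for the corresponding input size, and that no coefficient of $fg$ lands outside the $2n-1$ cells allocated for $C$. Once these size checks are dispensed with, the rest is pure $\mathbb{F}_2$-arithmetic.
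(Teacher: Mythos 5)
Your proposal is correct and follows essentially the same route as the paper: a line-by-line accounting of what each call contributes to $C$ (here via the compact identity $h+fg=h+(1+x^k)\alpha+x^k\gamma+x^k(1+x^k)\beta$ from Section~\ref{s:ck}, where the paper tabulates the coefficient-split form), invoking the two hypotheses for the MULT1x$_k$ and recursive KMULT calls and noting that the final CNOT lines undo the earlier ones so $A$ and $B$ are restored. The base-case remark and the window-size checks you flag are fine but not essential, since the lemma assumes $n>1$ and the paper handles $n=1,2$ separately in Theorem~\ref{th:kmult}.
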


\begin{proof}
Table \ref{t:infmult} gives the result of each line on array $C$, split into 4 parts of size $k$, $k$, $k$ and $2n-1-3k$ respectively: $h=h_0+h_1x^k+h_2x^{2k}+h_3x^{3k}$. As can be seen in the table, the final result corresponds to $(h_0+\alpha_0)+(h_1+\alpha_0+\alpha_1+\beta_0+\gamma_0)x^k+(h_2+\alpha_1+\beta_0+\beta_1+\gamma_1)x^{2k}+(h_3+\beta_1)x^{3k}=h+f\cdot g$ as discussed in Section \ref{s:ck}. Lines 7 and 8 are the inverses of lines 4 and 5 so return $A$ and $B$ to their original states.
\end{proof}

\begin{table}[h]
    \centering
    \begin{tabular}{|c|c|c|c|c|}
        \hline
        Line&\multicolumn{4}{c|}{$C$ in KMULT} \\\cline{2-5}
        &$C[0..k-1]$&$C[k..2k-1]$&$C[2k..3k-1]$&$C[3k..2n-2]$\\
        \hline
        1 & $h_0$&$h_1$&$h_2$&$h_3$\\
        2 & $h_0+\alpha_0$&$h_1+\alpha_0+\alpha_1$&$h_2+\alpha_1$&$h_3$\\
        3-5 & $h_0+\alpha_0$&$h_1+\alpha_0+\alpha_1+\beta_0$&$h_2+\alpha_1+\beta_0+\beta_1$&$h_3+\beta_1$\\
        6-8 & $h_0+\alpha_0$&$h_1+\alpha_0+\alpha_1+\beta_0+\gamma_0$&$h_2+\alpha_1+\beta_0+\beta_1+\gamma_1$&$h_3+\beta_1$
        
        \\ \hline
    \end{tabular}
    \caption{Step by step calculation of Algorithm \ref{a:kmult}.}
    \label{t:infmult}
\end{table}

\noindent Algorithm \ref{a:kmult} computes $h+fg$ with $4(n-k)$ CNOT gates, at a depth of 4, 1 call to itself for a $k$-by-$k$ multiplication, 1 call to Algorithm \ref{a:1xk} for a $k$-by-$k$ multiplication and 1 call to Algorithm \ref{a:1xk} for an $(n-k)$-by-$(n-k)$ multiplication. For $n=1$ we just have a single TOF gate.

\begin{theorem}\label{th:kmult}
Given polynomials $f,g$ of degree up to $n-1$ and $h$ of degree up to $2n-2$, Algorithm \ref{a:kmult} correctly calculates $h+fg$. The values of $f$ and $g$ are the same after the algorithm as they were before.
\end{theorem}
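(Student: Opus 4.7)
The plan is to prove the theorem by strong induction on the size parameter $n$, combining the two conditional correctness results already in hand: Lemma \ref{l:1xk} verifies MULT1x$_k$ assuming KMULT works, and Lemma \ref{l:kmult} verifies KMULT assuming MULT1x$_k$ works (and KMULT works on strictly smaller input). Since each lemma has been discharged modulo correctness of a subroutine at smaller size, the theorem should follow once the recursion is anchored and every sub-call size is checked to shrink.

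For the base case $n=1$, the Else branch of Algorithm \ref{a:kmult} fires and executes $C[0]\leftarrow\text{TOF}(A[0],B[0],C[0])$. This sets $C[0]$ to $h_0 \oplus (f_0 \cdot g_0)$, which is exactly $(h+fg)$ in $\mathbb{F}_2$; the Toffoli leaves its two control qubits untouched, so $A$ and $B$ are preserved.

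For the inductive step, assume the theorem holds for every size $m<n$, and fix $n\geq 2$ with $k=\lceil n/2\rceil$. Then $k<n$ and $n-k\leq k<n$, so every recursive invocation of KMULT (line 6, at size $k$) and of MULT1x$_k$ (lines 2 and 3, at sizes $k$ and $n-k$ respectively) falls in the inductive range. The induction hypothesis gives correctness of KMULT at sizes $k$ and $n-k$, which is exactly what Lemma \ref{l:1xk} needs to conclude that the two MULT1x$_k$ calls behave correctly. With those calls validated and the inner KMULT call covered by the hypothesis, Lemma \ref{l:kmult} applies directly and delivers correctness of Algorithm \ref{a:kmult} at size $n$, along with preservation of $A$ and $B$ (the CNOT pairs in lines 4, 5 and 7, 8 cancel because CNOT is its own inverse).

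The main obstacle is purely bookkeeping: one must verify that the degree and size parameters handed to each subroutine actually satisfy the hypotheses of Lemmas \ref{l:1xk} and \ref{l:kmult}. In particular, the windows $C[0..3k-2]$ and $C[k..2n-2]$ fed to the two MULT1x$_k$ calls must have the length $2k+\ell$ demanded by Algorithm \ref{a:1xk} with $\ell=\max(0,2m-1-k)$ for the local multiplication size $m$, and the splitting $k=\lceil n/2\rceil$ must fulfil $k<n\leq 2k$ as required by Algorithm \ref{a:kmult}'s input specification. Once these constraints are ticked off the mutual induction closes cleanly and no further argument is needed.
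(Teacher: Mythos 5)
Your overall strategy is the same as the paper's: a (strong) induction on $n$ that anchors at the Toffoli base case and closes the loop between Lemma \ref{l:1xk} and Lemma \ref{l:kmult}, using $k=\lceil n/2\rceil < n$ and $n-k\le k$ to ensure all sub-calls shrink. The paper's proof is exactly this mutual induction, so there is no difference in method.

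There is, however, one concrete hole in your inductive step. You claim that correctness of KMULT at sizes $k$ and $n-k$ is ``exactly what Lemma \ref{l:1xk} needs'' to validate the two MULT1x$_k$ calls, but Lemma \ref{l:1xk} is stated only for inner size $n>1$, and it presupposes that Algorithm \ref{a:1xk} reaches its recursive KMULT call. When the half size equals $1$ --- which happens for $n=2$ (both halves) and $n=3$ (the $n-k$ half) --- Algorithm \ref{a:1xk} instead takes its Else branch (lines 7--9) and never calls KMULT, so neither Lemma \ref{l:1xk} nor your induction hypothesis covers it; this is not a constraint that can be ``ticked off,'' it simply fails. The fix is a short direct verification that lines 7--9 send $(C[0],C[k])=(h_0,h_k)$ to $(h_0+fg,\,h_k+fg)$, i.e.\ compute $h+(1+x^k)fg$ for degree-$0$ operands, which is precisely why the paper treats $n=2$ as an additional base case (citing lines 7--9 of Algorithm \ref{a:1xk}) before running the induction for $n>2$. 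With that one-line check added, your argument matches the paper's and is complete.
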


\begin{proof}
We use proof by induction. For $n=1$ line 10 of Algorithm \ref{a:kmult} correctly calculates $h+fg$ without altering $f$ or $g$.\\\indent
For $n=2$ two calls are made to Algorithm \ref{a:1xk} and one call to Algorithm \ref{a:kmult} with $n'=1$ and $k'=1$. Lines 7-9 of Algorithm \ref{a:1xk} correctly calculate $h'+(1+x^k)f'g'$.\\\indent
For $n>2$ we use lemmas \ref{l:1xk} and \ref{l:kmult} as our inductive steps. Every time Algorithm \ref{a:kmult} is called recursively to calculate $h'+f'g'$ with $f',g'$ of degree $n'-1$, it is with either $n'=\lceil\frac{n}{2}\rceil$ or $n'=n-\lceil\frac{n}{2}\rceil=\lfloor\frac{n}{2}\rfloor$.\\\indent 
The series $\lceil\frac{n}{2}\rceil,\lceil\frac{\lceil\frac{n}{2}\rceil}{2}\rceil,\lceil\frac{\lceil\frac{\lceil\frac{n}{2}\rceil}{2}\rceil}{2}\rceil,...$ reaches 1 in $O(\log n)$ steps and $\lfloor\frac{n}{2}\rfloor\leq\lceil\frac{n}{2}\rceil$. From this we can see that we reach $n'=1$ or 2 in finite steps. By induction Algorithm \ref{a:kmult} correctly calculates $h+fg$ and returns $f$ and $g$ to their original values.
\end{proof}

\section{Reversible Karatsuba multiplication in binary finite fields}\label{s:revkar}
With this basis, we can move on to the modular multiplication. We will need Algorithm \ref{a:constmult}, which we will also run in reverse for multiplication by an inverse, and the binary shifts from Section \ref{s:shift}, which we will refer to as MODSHIFT, as well as the previous Karatsuba algorithms. Unlike before, we will assume we start with an all-zero input. If we would want to add the multiplication result to a polynomial in C we would have to prepare it by first performing $k$ divisions by $x$ (reverse MODSHIFT) which would take $k$ or $3k$ CNOT gates for trinomials and pentanomials respectively. We can see in Algorithm \ref{a:modmult} the number of operations we use:
\begin{itemize}
    \item 3 calls to Algorithm \ref{a:kmult}: twice for $k$-by-$k$ multiplication and once for $(n-k)$-by-$(n-k)$ multiplication.
    \item 2 calls to Algorithm \ref{a:constmult} (once in reverse), each time for multiplication by the same polynomial $1+x^k$.
    \item $k$ calls to MODSHIFT.
    \item 4 times $(n-k)$ CNOT gates, half of which can be performed at the same time.
\end{itemize}
Note that Algorithm \ref{a:kmult} can multiply two polynomials $f$ and $g$ of degree at most $\lceil\frac{n}{2}\rceil-1$ while needing $n$ space for the output polynomial $h$, which has degree $n-1$ at most in the case that $n$ is odd. We make recursive calls to Algorithm \ref{a:kmult} rather than Algorithm \ref{a:modmult} because it uses significantly fewer CNOT operations and fits in the required space.

\begin{table}[h]
    \centering
    \begin{tabular}{|c|c|}
        \hline
        Line & $C$ in MODMULT \\
        \hline
        1,2 & 0 \\
        3-5 & $\gamma$\\
        6 & $(1+x^k)^{-1}\gamma\mod m$ \\
        7 & $(1+x^k)^{-1}\gamma+\beta\mod m$\\
        9 & $x^k((1+x^k)^{-1}\gamma+\beta)\mod m$\\
        10 & $\alpha+x^k((1+x^k)^{-1}\gamma+\beta)\mod m$\\
        11 & $(1+x^k)\alpha+x^k\gamma+x^k(1+x^k)\beta\mod m$
        \\ \hline
    \end{tabular}
    \caption{Step-by-step calculation of Algorithm \ref{a:modmult}.}
    \label{t:finmult}
\end{table}

\begin{algorithm}[h]
\DontPrintSemicolon
\SetKwInOut{Input}{Fixed input}\SetKwInOut{Output}{Quantum Input}
\Input{A constant integer $n$ to indicate field size, $k=\lceil\frac{n}{2}\rceil$. $m(x)$ of degree $n$ as the field polynomial. The \textit{LUP}-decomposition precomputed for multiplication by $1+x^k$ modulo $m(x)$.}
\SetKwInOut{Input}{Quantum input}
\Input{Two binary polynomials $f(x),g(x)$ of degree up to $n-1$ stored in arrays $A$ and $B$ respectively of size $n$. An all-zero array $C$ of size $n$}
\KwResult{$A$ and $B$ as input, $C$ as $f\cdot g\mod m$.}
$A[0..n-k-1]\leftarrow \text{CNOT}(A[0..n-k-1],A[k..n-1])$\\
$B[0..n-k-1]\leftarrow \text{CNOT}(B[0..n-k-1],B[k..n-1])$\\
$C[0..n-1]\leftarrow\text{KMULT}(A[0..k-1],B[0..k-1],C[0..n-1])$\\
$B[0..n-k-1]\leftarrow \text{CNOT}(B[0..n-k-1],B[k..n-1])$\\
$A[0..n-k-1]\leftarrow \text{CNOT}(A[0..n-k-1],A[k..n-1])$\\
$C[0..n-1]\leftarrow\text{MULT}^{-1}_{1+x^k}(C[0..n-1])$ \\
$C[0..n-1]\leftarrow\text{KMULT}(A[k..n-1],B[k..n-1],C[0..n-1])$\\
\For{$i=0..k-1$}{
$C[0..n-1]\leftarrow\text{MODSHIFT}(C[0..n-1])$}
$C[0..n-1]\leftarrow\text{KMULT}(A[0..k-1],B[0..k-1],C[0..n-1])$\\
$C[0..n-1]\leftarrow\text{MULT}_{1+x^k}(C[0..n-1])$
\caption{MODMULT. Reversible algorithm for multiplication of 2 polynomials in $\mathbb{F}_2[x]/m(x)$ with $m(x)$ an irreducible polynomial. \label{a:modmult}}
\end{algorithm}

\begin{theorem}
Algorithm \ref{a:modmult} correctly calculates $fg$ in a field $\mathbb{F}_2[x]/m(x)$ and the values of $f$ and $g$ are the same after the algorithm as they were before.
\end{theorem}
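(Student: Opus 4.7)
The plan is to track the contents of $C$ line by line as summarized in Table \ref{t:finmult}, apply Theorem \ref{th:kmult} and the correctness of Algorithm \ref{a:constmult} at each step, and then verify that the final expression matches the Karatsuba identity derived in Section \ref{s:ck}. In parallel I would check that $A$ and $B$ are restored: lines 1--2 and 4--5 are CNOT layers that cancel each other because CNOT is self-inverse, and by the prior results Algorithms \ref{a:kmult} and \ref{a:constmult} do not alter their polynomial inputs.

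As a prerequisite I would note that since $m(x)$ is irreducible of degree $n$ and $1+x^k$ is a nonzero polynomial of degree $k<n$, the element $1+x^k$ is a unit in $\mathbb{F}_2[x]/m(x)$; this justifies the call to $\mathrm{MULT}^{-1}_{1+x^k}$ on line 6 and its cancellation with $\mathrm{MULT}_{1+x^k}$ on line 11. I would then check degree bounds: with $k=\lceil n/2\rceil$ the partial products $\alpha=f_0g_0$, $\beta=f_1g_1$ and $\gamma=(f_0+f_1)(g_0+g_1)$ all have degree at most $n-1$, so each KMULT call on lines 3, 7, 10 fits inside the $n$ qubits of $C$, and each addition can be read directly inside $\mathbb{F}_2[x]/m(x)$ without overflow.

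Walking through the algorithm: lines 1--2 load $f_0+f_1$ and $g_0+g_1$ into the low halves of $A$ and $B$; line 3 deposits $\gamma$ into $C$; lines 4--5 restore $A$ and $B$; line 6 converts $C$ to $(1+x^k)^{-1}\gamma\bmod m$; line 7 adds $\beta$; lines 8--9 multiply by $x^k$ modulo $m$; line 10 adds $\alpha$; line 11 multiplies by $1+x^k$ modulo $m$. By distributivity in $\mathbb{F}_2[x]/m(x)$, the final contents of $C$ are
\[
(1+x^k)\bigl[x^k\bigl((1+x^k)^{-1}\gamma+\beta\bigr)+\alpha\bigr] \;=\; (1+x^k)\alpha + x^k\gamma + x^k(1+x^k)\beta \pmod{m(x)},
\]
which by the identity at the end of Section \ref{s:ck} (with $h=0$) equals $fg\bmod m(x)$.

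The main obstacle I anticipate is the algebraic bookkeeping around the modular operations interleaved with polynomial accumulation. Specifically, one must carefully invoke distributivity to argue that the $k$ MODSHIFT calls on lines 8--9 multiply the previously accumulated sum $(1+x^k)^{-1}\gamma+\beta$ as a whole, and that line 11 similarly distributes $(1+x^k)$ over $x^k((1+x^k)^{-1}\gamma+\beta)+\alpha$. Once the cancellation $(1+x^k)\cdot(1+x^k)^{-1}=1$ is applied, what remains is exactly the algebraic identity already established in Section \ref{s:ck}, so the proof reduces to a straightforward verification of Table \ref{t:finmult} line by line.
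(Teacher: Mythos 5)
Your proposal is correct and takes essentially the same route as the paper's own proof: it tracks $C$ line by line as in Table \ref{t:finmult}, invokes the correctness of KMULT and the constant-multiplication circuit, matches the final content $(1+x^k)\alpha+x^k\gamma+x^k(1+x^k)\beta \bmod m$ with the Karatsuba identity of Section \ref{s:ck} (with $h=0$), and restores $A,B$ because lines 4--5 undo lines 1--2. Your added remarks on the invertibility of $1+x^k$ modulo the irreducible $m(x)$ and on the degree bounds of $\alpha,\beta,\gamma$ are sound details that the paper leaves implicit.
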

\begin{proof}
Table \ref{t:finmult} gives the result of each line on array $C$. As can be seen in the table, the final result corresponds to $(1+x^k)\alpha+x^k\gamma+ x^k(1+x^k)\beta\mod m$. Lines 4 and 5 are the inverses of lines 1 and 2 so return $A$ and $B$ to their original states.
\end{proof}

\section{Results}\label{s:results}
Algorithm \ref{a:modmult} uses the same number of Toffoli gates as regular Karatsuba multiplication: 3 half-sized multiplications. This means the asymptotic number of Toffoli gates is the same as for regular Karatsuba: $O(n^{\log(3)})\approx O(n^{1.58})$. This is a significant improvement over the $n^2$ Toffoli gates required for schoolbook multiplication. The number of CNOT gates is less clear as the number of CNOT gates required for the multiplications with constant polynomials is strongly dependent on our choice of field polynomial. It is not within the scope of this paper to find a stronger bound than $O(n^2)$ for the number of CNOT gates, which is currently used for the constant multiplication. In a strict comparison of these CNOT gates, this is worse than the $O(n)$ CNOT gates used by modular schoolbook multiplication, even if we can find a better estimate, but our primary goal is minimizing the number of Toffoli gates without introducing ancillary qubits. In our implementation, even the sum of CNOT and Toffoli gates ends up lower after some degree than the number of Toffoli gates for schoolbook multiplication.

\begin{table}[h]
    \centering
    \begin{tabular}{|c|c|ccc|}
        \hline
        Degree & schoolbook TOF gates & Algorithm \ref{a:modmult} TOF gates & CNOT gates & Depth upper bound \\
        \hline
        2 & 4 & 3 & 9 & 9 \\
        4 & 16 & 9 & 44 & 32 \\
        8 & 64 & 27 & 200 & 124 \\
        16 & 256 & 81 & 678 & 365\\
        32 & 1,024 & 243 & 2,238 & 1,110\\
        64 & 4,096 & 729 & 6,896 & 3,129\\
        127 & 16,129 & 2,185 & 20,632 & 8,769\\
        128 & 16,384 & 2,187 & 21,272 & 9,142\\
        163 & 26,569 & 4,387 & 37,168 & 17,906 \\
        233 & 54,289 & 6,323 & 63,655 & 29,530 \\
        256 & 65,536 & 6,561 & 64,706 & 26,725 \\
        283 & 80,089 & 10,273 & 89,620 & 41,548 \\ \hline
        571 & 326,041 & 31,171 & 270,940 & 121,821 \\
        1024 & 1,048,576 & 59,049 & 591,942 & 234,053
        \\ \hline
    \end{tabular}
    \caption{CNOT and TOF gate count and depth upper bounds for various instances of Algorithm \ref{a:modmult} as well as TOF gate count for schoolbook multiplication. Field polynomials used are the same as in Table \ref{t:constmult}, with the irreducible polynomial chosen that has the lowest CNOT count.}
    \label{t:time}
\end{table}
We implemented Algorithm \ref{a:modmult} in Java to simulate the execution. Code can be found in \cite{Git}. We used the program to automatically count the number of gates and give an estimate of the depth, see Table \ref{t:time} for the results. Depth count is done by maintaining a set of gates and checking every gate: if they overlap with the previous gate(s) the depth is increased by 1 and if they are not overlapping the gate is added to the set of gates to check against. The set of gates is cleared and replaced with the last gate whenever the depth is increased. The author is aware of methods to improve the depth but leaves this to future work.

When doing classical Karatsuba multiplication, the recursive Karatsuba multiplication is often substituted for schoolbook multiplication starting at a cutoff. For example, if multiplication is at most 7 times as expensive as addition, multiplication of two polynomials of degree at most 2 might be replaced by schoolbook multiplication to get 4 TOF gates instead of 3 TOF and 8 CNOT gates. However, the author is unaware of any realistic estimates of cost difference between CNOT and Toffoli gates where the difference is this small.

\subsection{Comparison to other instances of binary finite field multiplication}

\begin{table}[h]
    \centering
    \begin{tabular}{|c|ccc|ccc|ccc|}
    \hline
        Field size $2^n$ & \multicolumn{3}{c|}{Toffoli gates} & \multicolumn{3}{c|}{CNOT gates} & \multicolumn{3}{c|}{qubits} \\
        $n=$ & Here & \cite{kepley2015quantum} & \cite{maslov2009m} & Here & \cite{kepley2015quantum} & \cite{maslov2009m} & Here & \cite{kepley2015quantum} & \cite{maslov2009m}\\ \hline
        4 & 9 & 9 & 16 & 44 & 22 & 3 & 12 & 17 & 12\\
        16 & 81 & 81 & 256 & 678 & 376 & 45 & 48 & 113 & 48 \\
        127 & 2185 & 2185 & 16129 & 20632 & 13046 & 126 & 381 & 2433 & 381\\
        256 & 6561 & 6561 & 65536 & 64706 & 57008 & 765 & 768 & 7073 & 768\\
        $n$ & $O(n^{\log_2 3})$ & $O(n^{\log_2 3})$ & $n^2$ & $O(n^2)$ & $O(n^{\log_2 3})$ & $O(n)$ & $3n$ & $O(n^{\log_2 3})$ & $3n$ \\
        \hline
    \end{tabular}
    \caption{Comparison of this work with the works of Kepley and Steinwandt \cite{kepley2015quantum} and Maslov et al. \cite{maslov2009m} in terms of Toffoli and CNOT gates as well as qubit count.}
    \label{t:comparison}
\end{table}

\noindent We compare our algorithm to two previous instances of multiplication: a variant by Kepley and Steinwandt \cite{kepley2015quantum} that optimizes TOF gate count and a variant by Maslov, Mathew, Cheung and Pradhan \cite{maslov2009m} that does not use Karatsuba. Other variants exist, such as a Karatsuba variant by Parent, Roetteler and Mosca \cite{parent2017karatsuba}, that are worse in terms of space or Toffoli gate count. Since Kepley and Steinwandt use Clifford and T-gates rather than CNOT and Toffoli, we translate 7 of their T-gates and 8 Clifford gates to 1 Toffoli gate, and translate any remaining Clifford gates to CNOT. The resulting comparison is in Table \ref{t:comparison}. We can see that although Algorithm \ref{a:modmult} does not compare favorably in every regard, both the number of Toffoli gates and the number of qubits are best compared to the alternatives.

\section{Conclusion}
Algorithm \ref{a:modmult} provides a multiplication algorithm for binary polynomials in finite fields without using ancillary qubits and which has sub-quadratic Toffoli gate count. The CNOT gate count is high and the depth is not optimized, which is left open for future work: multiplication by a constant polynomial in $\mathbb{F}_{2^n}$ can likely be done in approximately linear time, which would bring down the theoretical CNOT gate count to the same order as classical Karatsuba. The saving in Toffoli gate count is the same as for Karatsuba on classical computers: for cryptographic field sizes the savings in Toffoli gates ranges from 80 to over 90 percent. This provides a basis for future work on elliptic curve problems on quantum computers as well as potential other work.

\subsubsection*{Acknowledgements} The author thanks Tanja Lange for her insights into quantum algorithms and classical finite field operations, Tanja Lange and Gustavo Banegas for their advice and supervision both on this paper and the master thesis this paper originates from, and to Daniel J. Bernstein for his insights into both quantum computing and classical multiplication algorithms.

\bibliographystyle{siam}
\bibliography{ms}

\begin{thebibliography}{10}

\bibitem{amento2012efficient}
{\sc B.~Amento, M.~R{\"{o}}tteler, and R.~Steinwandt}, {\em Efficient quantum
  circuits for binary elliptic curve arithmetic: reducing {T}-gate complexity},
  Quantum Information {\&} Computation, 13 (2013), pp.~631--644.

\bibitem{2005ehcc}
{\sc R.~Avanzi, H.~Cohen, C.~Doche, G.~Frey, T.~Lange, K.~Nguyen, and
  F.~Vercauteren}, eds., {\em Handbook of Elliptic and Hyperelliptic Curve
  Cryptography}, Chapman and Hall/CRC, 2005.

\bibitem{banegas2018new}
{\sc G.~Banegas, R.~Cust{\'o}dio, and D.~Panario}, {\em A new class of
  irreducible pentanomials for polynomial-based multipliers in binary fields},
  Journal of Cryptographic Engineering,  (2018), pp.~1--15.

\bibitem{cheng16space}
{\sc Y.~Cheng}, {\em Space-{E}fficient {K}aratsuba {M}ultiplication for
  {M}ulti-{P}recision {I}ntegers}, CoRR, abs/1605.06760 (2016).

\bibitem{fips2013186}
{\sc FIPS}, {\em {PUB} 186-4: Federal information processing standards
  publication. digital signature standard ({DSS})}, Information Technology
  Laboratory, National Institute of Standards and Technology (NIST),
  Gaithersburg, MD,  (2013), pp.~20899--8900.

\bibitem{karatsuba1962multiplication}
{\sc A.~A. Karatsuba and Y.~P. Ofman}, {\em Multiplication of many-digital
  numbers by automatic computers}, in Doklady {A}kademii {N}auk, vol.~145,
  Russian Academy of Sciences, 1962, pp.~293--294.

\bibitem{kepley2015quantum}
{\sc S.~Kepley and R.~Steinwandt}, {\em Quantum circuits for
  {$\mathbb{F}_{2^n}$}-multiplication with subquadratic gate count}, Quantum
  Information Processing, 14 (2015), pp.~2373--2386.

\bibitem{maeder1993karatsuba}
{\sc R.~Maeder}, {\em Storage {A}llocation for the {K}aratsuba {I}nteger
  {M}ultipliation {A}lgorithm}, in Design and Implementation of Symbolic
  Computation Systems, International Symposium, {DISCO} '93, Gmunden, Austria,
  September 15-17, 1993, Proceedings, 1993, pp.~59--65.

\bibitem{maslov2009m}
{\sc D.~Maslov, J.~Mathew, D.~Cheung, and D.~K. Pradhan}, {\em An
  {$O(m^2)$}-depth quantum algorithm for the elliptic curve discrete logarithm
  problem over {$GF(2^m)^a$}}, Quantum Information \& Computation, 9 (2009),
  pp.~610--621.

\bibitem{parent2017karatsuba}
{\sc A.~Parent, M.~Roetteler, and M.~Mosca}, {\em Improved reversible and
  quantum circuits for {K}aratsuba-based integer multiplication}, in 12th
  Conference on the Theory of Quantum Computation, Communication and
  Cryptography, {TQC} 2017, June 14-16, 2017, Paris, France, 2017,
  pp.~7:1--7:15.

\bibitem{roche2009space}
{\sc D.~S. Roche}, {\em Space- and {T}ime-{E}fficient {P}olynomial
  {M}ultiplication}, in Proceedings of the 2009 international symposium on
  Symbolic and algebraic computation, ACM, 2009, pp.~295--302.

\bibitem{seroussi1998table}
{\sc G.~Seroussi}, {\em Table of low-weight binary irreducible polynomials},
  Hewlett-Packard Laboratories, 1998.

\bibitem{thome2002karatsuba}
{\sc E.~Thom{\'e}}, {\em Karatsuba multiplication with temporary space of size $\leq$ n}, Online, September,  (2002).
\newblock \url{https://members.loria.fr/EThome/files/kara.pdf}.

\bibitem{Git}
{\sc I.~van Hoof}, {\em {QMKMBP}: {Q}uantum modulo {K}aratsuba multiplier for
  binary polynomials}.
\newblock Github, 2019.
\newblock \url{https://github.com/ikbenbeter/QMKMBP}.

\end{thebibliography}
%\appendix

\end{document}